\newtheorem{definition}{Definition}[section]
\newtheorem{theorem}[definition]{Theorem}
\newtheorem{lemma}[definition]{Lemma}
\newtheorem{corollary}[definition]{Corollary}
\newtheorem{proposition}[definition]{Proposition}
\theoremstyle{definition}
\newtheorem{remark}[definition]{Remark}
\newtheorem{defn}[definition]{Definition}
\newtheorem{example}[definition]{Example}
\newcommand{\C}{\mathbb{C}}
\newcommand\tr{ \operatorname{Tr} }
\def\ket#1{| #1 \rangle}
\def\bra#1{\langle #1 |}
\def\kb#1#2{|#1\rangle\!\langle #2 |}
\title[Approximate Quasiorthogonality and Relative Quantum Privacy]{Approximate Quasiorthogonality of Operator Algebras and Relative Quantum Privacy}
\begin{document}

\author[D.~W. Kribs, J.~Levick, M.~Nelson, R. Pereira, M.~Rahaman]{David~W.~Kribs$^{1,2}$, Jeremy Levick$^{1,2}$, Mike Nelson$^{1}$, Rajesh Pereira$^{1}$, Mizanur Rahaman$^{2,3}$}

\address{$^1$Department of Mathematics \& Statistics, University of Guelph, Guelph, ON, Canada N1G 2W1}
\address{$^2$Institute for Quantum Computing, University of Waterloo, Waterloo, ON, Canada N2L 3G1}
\address{$^3$Department of Pure Mathematics, University of Waterloo, Waterloo, ON, Canada N2L 3G1}

\subjclass[2010]{47L90, 81P45, 81P94, 94A40}

\keywords{completely positive map, private quantum channel, conditional expectation, operator algebra, private quantum code, quasiorthogonal algebra.}


\begin{abstract}
We show that the approximate quasiorthogonality of two operator algebras is equivalent to the algebras being approximately private relative to their conditional expectation quantum channels. Our analysis is based on a characterization of the measure of orthogonality in terms of Choi matrices and Kraus operators for completely positive maps. We present examples drawn from different areas of quantum information.
\end{abstract}

\maketitle

\section{Introduction}

The notion of quasiorthogonality for operator algebras arose from the study of modified forms of orthogonality for algebras and their relative behaviours in a variety of settings in finite-dimensional quantum information. Primarily motivated by mutually unbiased bases (MUB) constructions  \cite{bandyopadhyay2002new,kr04,kimura2006,durt2010mutually,spengler2012entanglement} and their associated commutative algebras initially, over the past decade the work expanded to the non-commutative setting \cite{petz2007complementarity,ohno2007quasi,petz2007complementary,ohno2008quasi,petz2009complementarity,petz2010algebraic}. Built on these efforts, Weiner \cite{Wein} opened up the study of approximate quasiorthogonality by introducing a measure of orthogonality between two algebras based on joint properties of their conditional expectation channels, and establishing results on the approximate version for some important special cases.

From a different direction, still with quantum information motivation, it has recently been recognized that there are connections between the study of quasiorthogonal operator algebras and work in quantum privacy; specifically, on the topic of what are variously known as private quantum channels or codes, decoherence-full or private subspaces and subsystems, and private algebras \cite{ambainis,boykin,bartlett1,bartlett2,kks,church,jochym,jochym1,cklt}. In particular, for a number of special cases of channels or algebras, quasiorthogonality has been linked with certain quantum privacy properties in those cases \cite{ljklp,klp17,kribs2018quantum}, all suggesting a deeper more general link between the topics.

In this paper, we establish the first general result that ties together approximate quasiorthogonality of operator algebras with approximate privacy for quantum codes, represented as algebras. This involves identification of an appropriate notion of relative quantum privacy, with natural assumptions on the algebras and private quantum codes considered, and in doing so, we also derive a new approach for computing Weiner's measure of orthogonality in terms of Choi matrices and Kraus operators for the conditional expectation channels of the associated algebras. We present examples drawn from the framework for hybrid classical-quantum information theory, from studies of private quantum subsystems, and from work on approximate MUB constructions.

The paper is organized as follows. The next section includes preliminary material and the derivation of our approach to compute the orthogonality measure. In the third section we define relative approximate privacy of two algebras and present our main result and its proof. The fourth section contains examples and we conclude with a brief outlook discussion.

\section{Measure of Quasiorthogonality and Quantum Privacy}

Let $M_n(\C)$ be the set of $n\times n$ complex matrices. In all of what follows, $\mathcal{A},\mathcal{B}\subseteq M_n(\C)$ are unital $*$-algebras (or finite-dimensional C$^*$-algebras \cite{davidson}); that is, $I_n\in \mathcal{A}$, if $A \in \mathcal{A}$ so is $A^*$, and $\mathcal{A}$ is closed under linear combinations and matrix multiplication, and the same is true for $\mathcal{B}$.

Given such $\mathcal{A}$, $\mathcal{B}$ we denote by $\mathcal{E}_{\mathcal{A}}$, $\mathcal{E}_{\mathcal{B}}$ the (unique) trace-preserving, unital conditional expectations onto $\mathcal{A}$ and $\mathcal{B}$ respectively. That is, $\mathcal{E}_\mathcal{A}:M_n(\C)\rightarrow M_n(\C)$ (and similarly for $\mathcal E_{\mathcal B}$) is the linear map uniquely characterized by the following conditions:
\begin{enumerate}
\item $\mathcal{E}_{\mathcal{A}}(A) = A$ for all $A\in \mathcal{A}$
\item $\mathcal{E}_{\mathcal{A}}(X) \succeq 0$ whenever $X\succeq 0$
\item $\mathcal{E}_{\mathcal{A}}(A_1 X A_2) = A_1\mathcal{E}_{\mathcal{A}}(X)A_2$ for all $A_1,A_2\in \mathcal{A}$ and $X\in M_n(\C)$
\item $\mathrm{Tr}(\mathcal{E}_{\mathcal{A}}(X)) = \mathrm{Tr}(X)$ for all $X\in M_n(\C)$.
\end{enumerate}
We refer to $\mathcal E_{\mathcal A}$ as the {\it conditional expectation channel} for $\mathcal A$, reflecting standardized use of the term quantum channel to describe completely positive trace-preserving maps.

We now define the key notion of quasiorthogonal algebras, noting the early literature on the subject sometimes referred to the notion as `orthogonal' or `complementary'. (We stick to use of the `quasi' prefix as in \cite{Wein} as it avoids possible confusion with other quantum information notions that use these other terms.)

\begin{defn}\label{q.o} Two unital $*$-algebras $\mathcal{A}$, $\mathcal{B}$ are \emph{quasiorthogonal} if they satisfy any one of the following equivalent conditions:
\begin{enumerate}
\item $\mathrm{Tr}\bigl((A - \frac{\mathrm{Tr}(A)}{n}I_n)(B-\frac{\mathrm{Tr}(B)}{n}I_n)\bigr) = 0$ for all $A\in \mathcal{A}$, $B\in \mathcal{B}$
\item $\mathrm{Tr}(AB) = \frac{\mathrm{Tr}(A)\mathrm{Tr}(B)}{n}$ for all $A\in \mathcal{A}$, $B\in \mathcal{B}$
\item\label{3} $\mathcal{E}_{\mathcal{A}}(B) = \frac{\mathrm{Tr}(B)}{n}I_n$ for all $B\in \mathcal{B}$ and $\mathcal{E}_{\mathcal{B}}(A) = \frac{\mathrm{Tr}(A)}{n}I_n$ for all $A \in \mathcal{A}$
\end{enumerate}
\end{defn}

The ideal ($\epsilon = 0$ -- see next section) notion of quantum privacy we consider here is given as follows.

\begin{defn} Given a unital quantum channel $\Phi:M_n(\C)\rightarrow M_n(\C)$, we say a unital $*$-algebra $\mathcal{A}$ is {\it private for $\Phi$} whenever
$$\Phi(A) = \frac{\mathrm{Tr}(A)}{n}I_n, $$ for all $A\in \mathcal{A}$.
\end{defn}

\begin{remark}
Notice that Condition \ref{3} from Definition \ref{q.o} asserts that if $\mathcal{A}$ and $\mathcal{B}$ are quasiorthogonal then the conditional expectation onto the one algebra privatizes the other. The simplest example of this phenomena can be seen in the extreme case with $\mathcal A = M_n(\mathbb{C})$ and $\Phi = \mathcal D_n$ is the `complete depolarizing' channel, $\mathcal D_n(X) = n^{-1} \mathrm{Tr}(X) I_n$, which is the conditional expectation channel onto the scalar algebra $\mathcal B = \mathbb{C} I_n$. Observe also in this case that $\mathcal A$ and $\mathcal B$ are quasiorthogonal, which is a simple special case of our main result below.

More general notions of private algebras have been considered in the literature, often with different nomenclature as well, such as private quantum channels, decoherence-full or private subspaces and subsystems, and private algebras   \cite{ambainis,boykin,bartlett1,bartlett2,kks,church,jochym,jochym1,cklt,ljklp,klp17}. The distinguished special case we consider here captures many of the most naturally occurring examples from these settings, in addition to, as we shall see, allowing us to establish a tight connection with quasiorthogonality in the approximate case.
\end{remark}

In \cite{Wein}, Weiner introduced the following quantitative measure of orthogonality for algebras. We will focus on this notion for the rest of the section.

\begin{defn} For $\mathcal{A},\mathcal{B}\subseteq M_n(\C)$ unital $*$-algebras, the \emph{measure of orthogonality} between them is given by
\begin{equation}\label{mq1}\mathcal{Q}(\mathcal{A},\mathcal{B}):=\mathrm{Tr}(T_{\mathcal{A}}T_{\mathcal{B}})\end{equation} where $T_{\mathcal{A}}$ is the (any) matrix representation of $\mathcal{E}_{\mathcal{A}}$ acting on the vector space $M_n(\C)$.
\end{defn}

We shall make use of the explicit forms of our matrix representations so let us introduce notation $\{ \ket{i} : 1 \leq i \leq n \}$ for a fixed orthonormal (o.n.) basis for $\C^n$, and then $E_{ij} = \ket{i}\bra{j}$ for the corresponding set of matrix units of $M_n$, which themselves form an orthonormal basis in the trace inner product; $<A,B> = \tr (B^*A)$, $A,B\in M_n$. We will then work with the so-called `natural representation' $T_\Phi$ \cite{watrous} for a given linear map $\Phi : M_n(\C) \rightarrow M_n(\C)$ in the basis $\{ E_{ij} \otimes E_{kl}  \}$ this set defines for $M_{n^2}(\C) \cong M_n(\C) \otimes M_n(\C)$;
\[
T_{\Phi} = \sum_{i,j,k,l} < \Phi(E_{kl}) , E_{ij} > E_{ik} \otimes E_{jl} .
\]
On the other hand, we can consider the Choi matrix \cite{choimtx} for $\Phi$, given by
\[
C_{\Phi}:=\sum_{i,j=1}^n E_{ij}\otimes \Phi(E_{ij}).
\]
Using the expansion $\Phi(E_{ij}) = \sum_{k,l} < \Phi(E_{ij}) , E_{kl} > E_{kl}$, we see that $C_\Phi$ and $T_\Phi$ have the same matrix coefficients up to the (unitarily implemented) permutation that sends $E_{ij}\otimes E_{kl}\mapsto E_{ik}\otimes E_{jl}$.

Hence, if we denote by $C_{\mathcal{A}}$ the Choi matrix of $\mathcal{E}_{\mathcal{A}}$, and similarly for $C_{\mathcal{B}}$, then we have the following observation.

\begin{proposition}
Given $*$-algebras $\mathcal A$ and $\mathcal B$, we have
\begin{equation}\label{mq2}\mathcal{Q}(\mathcal{A},\mathcal{B}) = \mathrm{Tr}(C_{\mathcal{A}} C_{\mathcal{B}}).\end{equation}
\end{proposition}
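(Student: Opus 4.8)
The plan is to show that the trace $\mathrm{Tr}(T_{\mathcal{A}} T_{\mathcal{B}})$ defining $\mathcal{Q}(\mathcal{A},\mathcal{B})$ is unchanged when each natural representation $T_{\mathcal{E}}$ is replaced by the corresponding Choi matrix $C_{\mathcal{E}}$. The excerpt already observes that $C_\Phi$ and $T_\Phi$ are related by the fixed permutation $P$ on $M_{n^2}(\C)$ sending $E_{ij}\otimes E_{kl}\mapsto E_{ik}\otimes E_{jl}$; concretely, writing $U$ for the unitary on $\C^n\otimes\C^n$ that implements this permutation (the ``swap of the inner indices''), we have $C_\Phi = U\, T_\Phi\, U^*$ for every linear map $\Phi$. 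So the first step is to record this identity explicitly, noting that $U$ does not depend on $\Phi$: it is the same similarity for $\mathcal{E}_{\mathcal{A}}$ and for $\mathcal{E}_{\mathcal{B}}$.

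Given that, the computation is immediate:
\begin{equation*}
\mathrm{Tr}(C_{\mathcal{A}} C_{\mathcal{B}}) = \mathrm{Tr}\bigl( U T_{\mathcal{A}} U^* \, U T_{\mathcal{B}} U^* \bigr) = \mathrm{Tr}\bigl( U T_{\mathcal{A}} T_{\mathcal{B}} U^* \bigr) = \mathrm{Tr}(T_{\mathcal{A}} T_{\mathcal{B}}) = \mathcal{Q}(\mathcal{A},\mathcal{B}),
\end{equation*}
using $U^*U = I$ in the middle and the cyclicity (unitary invariance) of the trace at the end. This is the whole proof in outline.

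The one point that needs a little care — and what I would treat as the ``main obstacle,'' though it is more of a bookkeeping matter than a genuine difficulty — is verifying that the index-permutation map really is implemented by a single unitary $U$ on $\C^{n}\otimes\C^{n}$ independent of $\Phi$, and that under it $T_\Phi \mapsto C_\Phi$ rather than the reverse or some transpose variant. I would do this by checking the action on the basis: from the displayed formula $T_\Phi = \sum_{i,j,k,l} \langle \Phi(E_{kl}), E_{ij}\rangle\, E_{ik}\otimes E_{jl}$ and $C_\Phi = \sum_{i,j,k,l}\langle \Phi(E_{ij}), E_{kl}\rangle\, E_{ij}\otimes E_{kl}$, relabel indices in $C_\Phi$ so that its $(i,k),(j,l)$ matrix entry matches that of $T_\Phi$, and confirm the reindexing is exactly conjugation by the permutation unitary $U = \sum_{a,b} E_{ab}\otimes E_{ba}'$-type operator fixed by $U(e_i\otimes e_k) = $ the appropriate reshuffled basis vector. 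Since the permutation is a fixed bijection of an orthonormal basis of $\C^{n}\otimes\C^{n}$, it is automatically unitary, which is all we need.

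Once the conjugation identity is in hand there is nothing left to prove; the result follows purely formally from unitary invariance of the trace, so I would keep the write-up short and let the index check carry the weight.
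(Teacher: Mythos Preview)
Your key claim --- that there is a unitary $U$ on $\C^n\otimes\C^n$ with $C_\Phi = U\,T_\Phi\,U^*$ for every linear map $\Phi$ --- is false, and the ``index check'' you propose would reveal this rather than confirm it. Conjugation by any $U$ sends a matrix unit $|i,k\rangle\langle j,l|$ to $(U|i,k\rangle)(U|j,l\rangle)^*$: the output row is a function of the input row alone, and likewise for columns. But the reshuffle $E_{ij}\otimes E_{kl}\mapsto E_{ik}\otimes E_{jl}$ sends the entry at row $(i,k)$, column $(j,l)$ to row $(i,j)$, column $(k,l)$; the new row depends on \emph{both} the old row and the old column, so no conjugation can do this. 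Concretely, your argument would prove $\mathrm{Tr}(C_\Phi C_\Psi)=\mathrm{Tr}(T_\Phi T_\Psi)$ for \emph{all} linear maps, which fails already for $n=2$, $\Phi=\Psi$ given by $\Phi(X)=E_{11}X$: there $T_\Phi=E_{11}\otimes I_2$ with $\mathrm{Tr}(T_\Phi^2)=2$, while $C_\Phi=E_{11}\otimes E_{11}+E_{12}\otimes E_{12}$ with $\mathrm{Tr}(C_\Phi^2)=1$.

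What \emph{is} true is that the reshuffle permutes the orthonormal basis $\{E_{ij}\otimes E_{kl}\}$ of the Hilbert space $(M_{n^2},\langle\cdot,\cdot\rangle_{\mathrm{HS}})$, so it is a unitary operator on that Hilbert space and hence preserves the Hilbert--Schmidt inner product: $\mathrm{Tr}(C_\Phi^{\,*}\,C_\Psi)=\mathrm{Tr}(T_\Phi^{\,*}\,T_\Psi)$ for all $\Phi,\Psi$. This is presumably what the paper means by ``unitarily implemented.'' To pass from this to $\mathrm{Tr}(C_{\mathcal A}C_{\mathcal B})=\mathrm{Tr}(T_{\mathcal A}T_{\mathcal B})$ you must use something specific to conditional expectations: $T_{\mathcal A}$ is Hermitian because $\mathcal E_{\mathcal A}$ is the orthogonal projection onto $\mathcal A$ (self-adjoint in the trace inner product), and $C_{\mathcal A}$ is Hermitian because $\mathcal E_{\mathcal A}$ is completely positive. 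With both stars removable, the Hilbert--Schmidt identity becomes the desired trace identity. Your write-up needs to replace the false conjugation claim with this two-step argument.
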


We shall make use of the following internal description of the Choi matrix for a conditional expectation. Recall that if $A$ is a matrix, $\overline{A}$ is its complex conjugate matrix.

\begin{lemma} Let $\mathcal{A}\subseteq M_n(\C)$ be a $*$-algebra and let $\{A_i\}_{i=1}^{d_1}$ be any o.n. basis for $\mathcal{A}$. Then we have
\begin{equation}\label{choiform} C_{\mathcal{A}} = \sum_{i=1}^{d_1} \overline{A_i}\otimes A_i.\end{equation}
\end{lemma}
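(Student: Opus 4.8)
The plan is to compute the Choi matrix of $\mathcal{E}_{\mathcal{A}}$ directly from the defining properties of the conditional expectation and an explicit orthonormal-basis formula for it. First I would recall (or re-derive) the standard fact that the trace-preserving unital conditional expectation onto $\mathcal{A}$ acts on $X \in M_n(\C)$ by $\mathcal{E}_{\mathcal{A}}(X) = \sum_{i=1}^{d_1} A_i \, \mathrm{Tr}(A_i^* X)$, where $\{A_i\}_{i=1}^{d_1}$ is any orthonormal basis for $\mathcal{A}$ in the trace inner product $\langle A, B\rangle = \mathrm{Tr}(B^*A)$. The quickest justification: the right-hand side is exactly the orthogonal projection of $M_n(\C)$ onto the subspace $\mathcal{A}$ with respect to the trace inner product, so it fixes $\mathcal{A}$ pointwise; one then checks it satisfies the bimodule property (3) and trace preservation (4) — these follow because the orthogonal projection onto a $*$-subalgebra containing $I_n$ automatically has these properties — and invokes the uniqueness asserted in the preliminaries to identify it with $\mathcal{E}_{\mathcal{A}}$.

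Next I would substitute this formula into the definition of the Choi matrix. We have
\[
C_{\mathcal{A}} = \sum_{p,q=1}^n E_{pq} \otimes \mathcal{E}_{\mathcal{A}}(E_{pq}) = \sum_{p,q=1}^n \sum_{i=1}^{d_1} E_{pq} \otimes A_i \, \mathrm{Tr}(A_i^* E_{pq}).
\]
Writing $A_i = \sum_{p,q} (A_i)_{pq} E_{pq}$, we get $\mathrm{Tr}(A_i^* E_{pq}) = \overline{(A_i)_{pq}}$, so the expression becomes
\[
C_{\mathcal{A}} = \sum_{i=1}^{d_1} \Bigl( \sum_{p,q} \overline{(A_i)_{pq}}\, E_{pq} \Bigr) \otimes A_i = \sum_{i=1}^{d_1} \overline{A_i} \otimes A_i,
\]
which is precisely \eqref{choiform}.

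The main thing to be careful about — more a bookkeeping point than a genuine obstacle — is the placement of complex conjugates and the order of tensor factors, i.e. making sure $\overline{A_i}$ lands in the first leg (matching the convention $C_\Phi = \sum_{ij} E_{ij} \otimes \Phi(E_{ij})$ used in the paper) rather than $A_i^{T}$ or $A_i$ itself. The only substantive input is the orthonormal-expansion formula for $\mathcal{E}_{\mathcal{A}}$ together with its uniqueness, both of which are essentially standard and, for uniqueness, already granted in the preliminary discussion; everything after that is the short index computation above. One should also note the formula is manifestly independent of the choice of orthonormal basis, as it must be, since a change of orthonormal basis of $\mathcal{A}$ is implemented by a unitary $U = (u_{ij})$ and $\sum_i \overline{A_i'} \otimes A_i' = \sum_{i,j,k} \overline{u_{ij}} u_{ik} \, \overline{A_j} \otimes A_k = \sum_j \overline{A_j} \otimes A_j$.
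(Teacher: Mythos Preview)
Your proof is correct but organized differently from the paper's. You use directly that $\mathcal{E}_{\mathcal{A}}$ is the Hilbert--Schmidt orthogonal projection onto $\mathcal{A}$, i.e.\ $\mathcal{E}_{\mathcal{A}}(X)=\sum_i \langle X,A_i\rangle A_i$, and then plug into $C_{\mathcal{A}}=\sum_{p,q}E_{pq}\otimes \mathcal{E}_{\mathcal{A}}(E_{pq})$ to collapse the $p,q$ sum into $\overline{A_i}$. The paper instead first extends $\{A_i\}$ to an o.n.\ basis $\{A_i\}\cup\{A_j^\perp\}$ of $M_n$, proves the auxiliary identity $\sum_i \overline{A_i}\otimes A_i+\sum_j \overline{A_j^\perp}\otimes A_j^\perp=\sum_{p,q}E_{pq}\otimes E_{pq}$ (characterizing it as the Choi matrix of the identity map), and then applies $\mathrm{id}\otimes\mathcal{E}_{\mathcal{A}}$, using $\mathcal{E}_{\mathcal{A}}(A_i)=A_i$ and $\mathcal{E}_{\mathcal{A}}(A_j^\perp)=0$. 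Both arguments ultimately rest on the same fact---that the trace-preserving conditional expectation is the orthogonal projection---but your route is shorter and avoids the detour through the identity channel, while the paper's version isolates Eq.~(\ref{basis}) as a standalone identity (the basis-independence of the maximally entangled Choi matrix), which some readers may find useful in its own right.
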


\begin{proof} We first extend $\{A_i\}_{i=1}^{d_1}$ to an o.n. basis on the full matrix space by appending the elements of $\{A^{\perp}_j\}_{j=1}^{n^2 - d_1}$.
We claim that \begin{equation}\label{basis} \sum_{i=1}^{d_1} \overline{A_i}\otimes A_i + \sum_{j=1}^{n^2 - d_1}\overline{A_j^{\perp}}\otimes A_j^{\perp} = \sum_{i,j=1}^n E_{ij}\otimes E_{ij}.\end{equation}
To see why this is so, observe that for all $X \in M_n(\C)$,
\[
(\tr \otimes \mathrm{id})\biggl((X^T\otimes I_n)(\sum_{i=1}^{d_1}\overline{A_i}\otimes A_i + \sum_{j=1}^{n^2 -d_1} \overline{A_j^{\perp}}\otimes A_j^{\perp})\biggr)
\]
\begin{align*}& = \sum_{i=1}^{d_1}\mathrm{Tr}(X^T\overline{A_i})A_i + \sum_{j=1}^{n^2-d_1}\mathrm{Tr}(X^T \overline{A_j^{\perp}})A_j^{\perp}\\
& = \sum_{i=1}^{d_1} \mathrm{Tr}(XA_i^*)A_i + \sum_{j=1}^{n^2-d_1} \mathrm{Tr}(XA_j^{\perp *})A_j^{\perp}\\
& = \sum_{i=1}^{d_1} <X , A_i > A_i + \sum_{j=1}^{n^2-d_1} <X , A_j^{\perp *} > A_j^{\perp}\\
& = X,
\end{align*}
since $\{A_i\}\cup\{A_j^{\perp}\}$ forms an o.n. basis for $M_n(\C)$.
If we denote $B = \sum_{i=1}^{d_1} \overline{A_i}\otimes A_i + \sum_{j=1}^{n^2 - d_1}\overline{A_j^{\perp}}\otimes A_j^{\perp}$, then we observe that the property encoded above is $(\tr \otimes \mathrm{id})\bigl((X^T\otimes I_n)B \bigr) = X$, which uniquely characterizes $B$ as the Choi matrix of the identity map \cite{watrous}, and so $B = \sum_{i,j=1}^n E_{ij}\otimes \mathrm{id}(E_{ij}) = \sum_{i,j=1}^n E_{ij}\otimes E_{ij}$, as claimed.

Now, we recall that $C_{\mathcal{A}}:= (\mathrm{id}\otimes \mathcal{E}_{\mathcal{A}})\bigl(\sum_{i,j=1}^n E_{ij}\otimes E_{ij}\bigr)$ and so by Eq.~(\ref{basis}) we can equivalently say that
$$C_{\mathcal{A}} = (\mathrm{id}\otimes \mathcal{E}_{\mathcal{A}})\bigl( B \bigr) =  \sum_{i=1}^{d_1} \overline{A_i} \otimes \mathcal{E}_{\mathcal{A}}(A_i) + \sum_{j=1}^{n^2-d_1}\overline{A_j^{\perp}}\otimes \mathcal{E}_{\mathcal{A}}(A_j^{\perp}).$$
Since $\mathcal{E}_{\mathcal{A}}(A) = A$ for all $A\in \mathcal{A}$, and $\mathcal{E}_{\mathcal{A}}(A_j^{\perp}) = 0$ for each $A_j^{\perp}$, this simplifies to
$$C_{\mathcal{A}} = \sum_{i=1}^{d_1} \overline{A_i}\otimes A_i,$$
completing the proof.
\end{proof}

This leads to the following 2-norm type characterization of $Q(\mathcal A, \mathcal B)$.

\begin{corollary}\label{Qcompute}
Let $\mathcal{A}$, $\mathcal{B}\subseteq M_n(\C)$ be unital $*$-algebras. Then the measure of orthogonality may equivalently be expressed as:
\begin{equation}\label{mq3} \mathcal{Q}(\mathcal{A},\mathcal{B})=\sum_{i,j=1}^{d_1,d_2}|\mathrm{Tr}(A_iB_j)|^2, \end{equation} for any o.n. bases $\{A_i\}_{i=1}^{d_1}$, $\{B_j\}_{j=1}^{d_2}$ for $\mathcal{A}$ and $\mathcal{B}$ respectively.
\end{corollary}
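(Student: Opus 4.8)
The plan is to obtain the formula by combining the two preceding results with a short trace computation. Starting from Equation~(\ref{mq2}), which gives $\mathcal{Q}(\mathcal{A},\mathcal{B}) = \mathrm{Tr}(C_{\mathcal{A}} C_{\mathcal{B}})$, I would insert the internal description of the Choi matrices supplied by Equation~(\ref{choiform}). Since the Lemma applies to \emph{any} o.n. basis, for the bases $\{A_i\}_{i=1}^{d_1}$ and $\{B_j\}_{j=1}^{d_2}$ named in the statement we may write $C_{\mathcal{A}} = \sum_{i=1}^{d_1}\overline{A_i}\otimes A_i$ and $C_{\mathcal{B}} = \sum_{j=1}^{d_2}\overline{B_j}\otimes B_j$.

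Next I would expand the product and use the fact that the trace on $M_n(\C)\otimes M_n(\C)$ factors as the product of the traces on the two tensor legs, giving
\[
\mathrm{Tr}(C_{\mathcal{A}} C_{\mathcal{B}}) = \sum_{i,j}\mathrm{Tr}\bigl((\overline{A_i}\,\overline{B_j})\otimes(A_iB_j)\bigr) = \sum_{i,j}\mathrm{Tr}(\overline{A_i}\,\overline{B_j})\,\mathrm{Tr}(A_iB_j).
\]
The remaining step is the complex-conjugate bookkeeping: entrywise conjugation is multiplicative, so $\overline{A_i}\,\overline{B_j} = \overline{A_iB_j}$, and the trace of a conjugate matrix is the conjugate of its trace, so $\mathrm{Tr}(\overline{A_i}\,\overline{B_j}) = \overline{\mathrm{Tr}(A_iB_j)}$. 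Hence each summand is $\overline{\mathrm{Tr}(A_iB_j)}\,\mathrm{Tr}(A_iB_j) = |\mathrm{Tr}(A_iB_j)|^2$, which is exactly Equation~(\ref{mq3}).

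There is no real obstacle here, as the substantive work was already done in the Lemma and the Proposition; the only points requiring care are confirming $\overline{XY}=\overline{X}\,\overline{Y}$ and $\mathrm{Tr}(\overline{X})=\overline{\mathrm{Tr}(X)}$, and that the trace over $M_{n^2}(\C)$ splits over the tensor factors. One could also note that the resulting quantity is manifestly independent of the chosen o.n. bases, consistent with $\mathcal{Q}(\mathcal{A},\mathcal{B})$ being intrinsically defined: since $\mathrm{Tr}(A_iB_j) = \langle A_i, B_j^*\rangle$ and $\{B_j^*\}$ is again an o.n. basis for $\mathcal{B}$, the sum $\sum_{i,j}|\mathrm{Tr}(A_iB_j)|^2$ is the squared Hilbert--Schmidt norm of the matrix $[\langle A_i, B_j^*\rangle]$, which is invariant under unitary changes of the bases $\{A_i\}$ and $\{B_j\}$.
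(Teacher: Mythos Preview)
Your proposal is correct and follows essentially the same route as the paper: both start from $\mathcal{Q}(\mathcal{A},\mathcal{B}) = \mathrm{Tr}(C_{\mathcal{A}}C_{\mathcal{B}})$, substitute the Lemma's expression $C_{\mathcal{A}} = \sum_i \overline{A_i}\otimes A_i$ (and likewise for $\mathcal{B}$), and then simplify the trace of the tensor product to $\sum_{i,j}|\mathrm{Tr}(A_iB_j)|^2$. Your write-up just spells out the conjugate bookkeeping and the tensor-trace factorization that the paper leaves implicit, and adds a (non-required) remark on basis independence.
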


\begin{proof}
We apply the formula of Eq.~(\ref{mq2}) to the expression for $C_{\mathcal{A}}$ obtained above to obtain:
\begin{align*}
\mathcal{Q}(\mathcal{A},\mathcal{B}) &= \mathrm{Tr}(C_{\mathcal{A}}C_{\mathcal{B}}) \\
& = \sum_{i,j=1}^{d_1,d_2}\mathrm{Tr}\bigl((\overline{A_i}\otimes A_i)(\overline{B_j}\otimes B_j)\bigr)\\
&= \sum_{i,j=1}^{d_1,d_2} |\mathrm{Tr}(A_iB_j)|^2.
\end{align*}
\end{proof}

We have stated that $\mathcal{Q}(\mathcal{A},\mathcal{B})$ is a measure of orthogonality. This is most explicitly seen through the following elementary observation of Weiner \cite{Wein}. We give an alternate simple proof based on the descriptions derived here.

\begin{proposition} Let $\mathcal{A}$, $\mathcal{B}\subseteq M_n(\C)$ be unital $*$-algebras. Then $\mathcal{A}$ and $\mathcal{B}$ are quasiorthogonal to one another if and only if $\mathcal{Q}(\mathcal{A},\mathcal{B}) = 1$.
\end{proposition}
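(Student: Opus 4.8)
The plan is to exploit the $2$-norm formula from Corollary~\ref{Qcompute} together with a judicious choice of orthonormal bases. Since $\mathcal{A}$ and $\mathcal{B}$ are both unital, the normalized identity $\tfrac{1}{\sqrt n}I_n$ lies in each, so I would choose o.n. bases $\{A_i\}_{i=1}^{d_1}$ for $\mathcal{A}$ and $\{B_j\}_{j=1}^{d_2}$ for $\mathcal{B}$ with $A_1 = B_1 = \tfrac{1}{\sqrt n}I_n$; this is legitimate because Corollary~\ref{Qcompute} shows $\mathcal{Q}(\mathcal{A},\mathcal{B})$ does not depend on the chosen bases. With this choice the $(1,1)$ term of the sum in Eq.~(\ref{mq3}) equals $|\mathrm{Tr}(\tfrac1n I_n)|^2 = 1$, and since every term is nonnegative we immediately get $\mathcal{Q}(\mathcal{A},\mathcal{B}) \ge 1$, with equality if and only if $\mathrm{Tr}(A_iB_j) = 0$ for every pair $(i,j)\neq(1,1)$. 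The proposition is thus reduced to showing that this vanishing condition is equivalent to quasiorthogonality.

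For the ``only if'' direction, assume $\mathcal{A}$ and $\mathcal{B}$ are quasiorthogonal and invoke Condition~(2) of Definition~\ref{q.o}. Orthogonality of $A_i$ to $A_1 = \tfrac1{\sqrt n}I_n$ in the trace inner product forces $\mathrm{Tr}(A_i) = 0$ for $i \ge 2$, and similarly $\mathrm{Tr}(B_j) = 0$ for $j \ge 2$. Hence for any pair $(i,j)\neq(1,1)$ at least one of $\mathrm{Tr}(A_i)$, $\mathrm{Tr}(B_j)$ vanishes, so $\mathrm{Tr}(A_iB_j) = \tfrac1n \mathrm{Tr}(A_i)\mathrm{Tr}(B_j) = 0$; by Corollary~\ref{Qcompute} only the $(1,1)$ term survives and $\mathcal{Q}(\mathcal{A},\mathcal{B}) = 1$.

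For the ``if'' direction, assume $\mathcal{Q}(\mathcal{A},\mathcal{B}) = 1$, so by the reduction above $\mathrm{Tr}(A_iB_j) = 0$ whenever $(i,j)\neq(1,1)$. Given arbitrary $A \in \mathcal{A}$ and $B \in \mathcal{B}$, expand $A = \sum_i a_i A_i$ and $B = \sum_j b_j B_j$ with $a_i = \langle A, A_i\rangle$ and $b_j = \langle B, B_j\rangle$; in particular $a_1 = \tfrac1{\sqrt n}\mathrm{Tr}(A)$ and $b_1 = \tfrac1{\sqrt n}\mathrm{Tr}(B)$. Then $\mathrm{Tr}(AB) = \sum_{i,j} a_i b_j \mathrm{Tr}(A_iB_j) = a_1 b_1 \mathrm{Tr}(A_1B_1) = \tfrac1n \mathrm{Tr}(A)\mathrm{Tr}(B)$, which is precisely Condition~(2) of Definition~\ref{q.o}; hence $\mathcal{A}$ and $\mathcal{B}$ are quasiorthogonal.

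Since the argument is essentially bookkeeping once the right bases are in hand, I do not anticipate a genuine obstacle; the only points requiring care are consistency of conventions -- remembering that the inner product is $\langle X,Y\rangle = \mathrm{Tr}(Y^*X)$ so that the coefficient of the self-adjoint basis vector $\tfrac1{\sqrt n}I_n$ in the expansion of $A$ really is $\tfrac1{\sqrt n}\mathrm{Tr}(A)$ -- and noting that Corollary~\ref{Qcompute} is stated for arbitrary o.n. bases, which is what allows the (generally non-self-adjoint) elements $A_i$, $B_j$ to be plugged into Definition~\ref{q.o}.
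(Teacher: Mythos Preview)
Your proof is correct and follows essentially the same approach as the paper's own proof: both choose orthonormal bases starting with $A_1=B_1=\tfrac{1}{\sqrt n}I_n$, observe that the $(1,1)$ term of the sum in Corollary~\ref{Qcompute} already contributes $1$, and then reduce the equivalence to the vanishing of all remaining terms, verifying each direction via Condition~(2) of Definition~\ref{q.o}. Your write-up is slightly more explicit about the inner-product conventions and the basis-independence of Corollary~\ref{Qcompute}, but the underlying argument is the same.
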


\begin{proof}
Let $\{A_i\}_{i=1}^{d_1}$, $\{B_j\}_{j=1}^{d_2}$ be o.n. bases for $\mathcal{A}$, $\mathcal{B}$ respectively, and suppose $A_1 = B_1 = \frac{1}{\sqrt{n}}I_n$ (which is possible since both algebras are unital), so that $\mathrm{Tr}(A_i) = \mathrm{Tr}(B_j) = 0$ for all $1<i\leq d_1$, $1< j \leq d_2$.

Then if $\mathcal A$ and $\mathcal B$ are quasiorthogonal, and using the fact that $A_i, B_j$ are traceless for $i,j\neq 1$, we have
\begin{align*}\mathcal{Q}(\mathcal{A},\mathcal{B}) &= \sum_{i,j=1}^{d_1,d_2} |\mathrm{Tr}(A_iB_j)|^2   \\
& = \sum_{i,j=1}^{d_1,d_2} \frac{1}{n^2}|\mathrm{Tr}(A_i)|^2|\mathrm{Tr}(B_j)|^2   \\
& = \frac{1}{n^2}\bigl|\frac{\mathrm{Tr}(I_n)}{\sqrt{n}}\bigr|^2\bigl|\frac{\mathrm{Tr}(I_n)}{\sqrt{n}}\bigr|^2  \\
& = 1. \\
\end{align*}

Conversely, keeping our o.n. bases for $\mathcal{A}$ and $\mathcal{B}$, suppose that
$$\mathcal{Q}(\mathcal{A},\mathcal{B}) = \sum_{i,j=1}^{d_1,d_2}|\mathrm{Tr}(A_iB_j)|^2 = 1.$$
Since $A_1 = B_1 = \frac{1}{\sqrt{n}}I_n$, we have
$$1 + \sum_{(i,j)\neq (1,1)} |\mathrm{Tr}(A_iB_j)|^2 = 1,$$
and so $\mathrm{Tr}(A_iB_j) = 0$ except when $(i,j)=(1,1)$.

Then, for any $A = \frac{\mathrm{Tr}(A)}{n}I_n + \sum_{i=2}^{d_1}a_i A_i \in \mathcal A$ and $B = \frac{\mathrm{Tr}(B)}{n}I_n + \sum_{j=1}^{d_2}b_j B_j \in \mathcal B$, we have that
\begin{eqnarray*}
\mathrm{Tr}(AB) &=& \frac{\mathrm{Tr}(A)\mathrm{Tr}(B)}{n^2}\mathrm{Tr}(I_n) + \sum_{(i,j)\neq (1,1)}a_ib_j\mathrm{Tr}(A_iB_j) \\
&=& \frac{1}{n}\mathrm{Tr}(A)\mathrm{Tr}(B),
\end{eqnarray*}
which is one of the equivalent conditions for quasiorthogonality.
\end{proof}

\begin{remark} Note that evidently for $\mathcal{A}$, $\mathcal{B}$ unital $*$-algebras, $\mathcal{Q}(\mathcal{A},\mathcal{B}) \geq 1$ since both algebras contain $\frac{1}{\sqrt{n}}I_n.$ So quasiorthogonality corresponds to the case where $\mathcal{Q}$ is minimized.
\end{remark}
Now we end the section with the definition of  $\epsilon$-quasiorthogonal subalgebras.
\begin{definition}
Given an $\epsilon>0$, two unital $*$-subalgebras $\mathcal{A},\mathcal{B}\subset M_n$ are called $\epsilon$- quasiorthogonal if
\[\mathcal{Q}(\mathcal{A},\mathcal{B})\leq 1+\epsilon.\]
\end{definition}

\section{Approximate Relative Quantum Privacy and Main Result}

We shall consider the following notion of approximate privacy in what follows.

We first recall the 2-norm of an operator is $|| A ||_2 = ( \tr (A^* A))^{\frac12}$, and so $< A , A > = || A ||_2^2$. Also $|| \Phi ||_2 = \sup_{||X||_2 = 1} ||\Phi(X)||_2$ for linear maps $\Phi$.

\begin{defn}
Let $\mathcal{A}$, $\mathcal{B}\subseteq M_n(\C)$ be unital $*$-algebras, and let $\epsilon > 0$. Then we say $\mathcal B$ is {\it $\epsilon$-private relative to $\mathcal A$} if
\begin{equation}\label{epsprivate}
|| ( \mathcal E_{\mathcal A}  - \mathcal D_n ) \circ \mathcal E_{\mathcal B} ||_2 < \epsilon ,
\end{equation}
where $\mathcal D_n$ is the complete depolarizing channel on $M_n(\C)$, $\mathcal D_n(X) = \frac{\tr(X)}{n} I_n$ for all $X\in M_n(\C)$.
\end{defn}

\begin{remark}
We are motivated to consider the 2-norm here as it is fairly standard in physically motivated quantum information settings, in addition to the description of $\mathcal Q$ as a particular 2-norm derived in Corollary~\ref{Qcompute}. We also note that our $\epsilon$-private language is in the spirit of terminology used in the context of approximate privacy previously (e.g. \cite{kks}).
\end{remark}

We now state and prove our main result.

\begin{theorem}\label{mainthm} Let $\mathcal{A}$, $\mathcal{B}\subseteq M_n(\C)$ be unital $*$-algebras, and let $d^{\mathrm{max}}_{{\mathcal A} {\mathcal B}} = \max ( \dim {\mathcal A} , \dim {\mathcal B})$ and $d^{\mathrm{min}}_{{\mathcal A} {\mathcal B}} = \min ( \dim {\mathcal A} , \dim {\mathcal B})$. Let $\epsilon > 0$. Suppose that
$$\mathcal{Q}(\mathcal{A},\mathcal{B}) \leq 1 + \frac{\epsilon^2}{d^{\mathrm{max}}_{\mathcal A \mathcal B} - 1}.$$
Then $\mathcal B$ (and respectively $\mathcal A$) is $\epsilon$-private relative to $\mathcal A$ (respectively to $\mathcal B$).

Conversely, suppose that $\mathcal B$ (and respectively $\mathcal A$) is $\epsilon$-private relative to $\mathcal A$ (respectively to $\mathcal B$). Then we have
$$\mathcal{Q}(\mathcal{A},\mathcal{B}) \leq 1 + (d^{\mathrm{min}}_{\mathcal A \mathcal B} - 1)\epsilon^2.$$
\end{theorem}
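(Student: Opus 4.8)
The plan is to recast everything in terms of orthogonal projections on the Hilbert space $(M_n(\C),\langle\cdot,\cdot\rangle)$. Recall that the trace-preserving conditional expectation $\mathcal E_{\mathcal A}$ is precisely the orthogonal projection $P_{\mathcal A}$ of $M_n(\C)$ onto $\mathcal A$ (the fact already used in the lemma above: $\mathcal E_{\mathcal A}$ fixes $\mathcal A$ and annihilates its orthogonal complement), and similarly $\mathcal E_{\mathcal B}=P_{\mathcal B}$, while $\mathcal D_n=P_{\C I_n}$. Since $\mathcal A,\mathcal B$ are unital we have $\C I_n\subseteq\mathcal A\cap\mathcal B$, so $\mathcal E_{\mathcal A}-\mathcal D_n=P_{V_{\mathcal A}}$, where $V_{\mathcal A}:=\mathcal A\ominus\C I_n$ is the space of traceless elements of $\mathcal A$, of dimension $\dim\mathcal A-1$; likewise $V_{\mathcal B}$. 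As $P_{V_{\mathcal A}}P_{\C I_n}=0$, the operator in the privacy condition collapses to
\[
\Phi:=(\mathcal E_{\mathcal A}-\mathcal D_n)\circ\mathcal E_{\mathcal B}=P_{V_{\mathcal A}}P_{\mathcal B}=P_{V_{\mathcal A}}P_{V_{\mathcal B}},
\]
with adjoint $\Phi^{*}=P_{V_{\mathcal B}}P_{V_{\mathcal A}}=(\mathcal E_{\mathcal B}-\mathcal D_n)\circ\mathcal E_{\mathcal A}$. Since a linear map and its adjoint have the same $2$-norm (here the $2$-norm of a map is just its operator norm on this Hilbert space), the two ``$\epsilon$-private'' statements in the theorem are equivalent, which handles all the ``(and respectively)'' clauses at once.

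I would then extract both quantities of interest from the single positive operator $R:=\Phi^{*}\Phi=P_{V_{\mathcal B}}P_{V_{\mathcal A}}P_{V_{\mathcal B}}$, whose eigenvalues I write as $1\ge\lambda_1\ge\lambda_2\ge\cdots\ge 0$. On one side, $\|\Phi\|_2^{2}=\|R\|_{\mathrm{op}}=\lambda_1$, so ``$\mathcal B$ is $\epsilon$-private relative to $\mathcal A$'' says exactly $\lambda_1<\epsilon^{2}$. On the other side, $\mathcal Q(\mathcal A,\mathcal B)=\mathrm{Tr}(T_{\mathcal A}T_{\mathcal B})=\mathrm{Tr}(P_{\mathcal A}P_{\mathcal B})$ since $T_{\mathcal A},T_{\mathcal B}$ are matrices of the operators $\mathcal E_{\mathcal A}=P_{\mathcal A}$, $\mathcal E_{\mathcal B}=P_{\mathcal B}$ and the trace of a linear operator is basis-independent; expanding $P_{\mathcal A}=P_{V_{\mathcal A}}+P_{\C I_n}$ and $P_{\mathcal B}=P_{V_{\mathcal B}}+P_{\C I_n}$ and cancelling the vanishing cross-terms yields $\mathrm{Tr}(P_{\mathcal A}P_{\mathcal B})=\mathrm{Tr}(P_{V_{\mathcal A}}P_{V_{\mathcal B}})+1$, hence
\[
\mathcal Q(\mathcal A,\mathcal B)-1=\mathrm{Tr}(R)=\sum_k\lambda_k .
\]
Finally I would bound the number $r$ of nonzero $\lambda_k$ by $r=\operatorname{rank}R=\operatorname{rank}(P_{V_{\mathcal A}}P_{V_{\mathcal B}})\le\min(\dim V_{\mathcal A},\dim V_{\mathcal B})=d^{\mathrm{min}}_{\mathcal A\mathcal B}-1$.

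With the identities $\lambda_1=\|(\mathcal E_{\mathcal A}-\mathcal D_n)\circ\mathcal E_{\mathcal B}\|_2^{2}$ and $\sum_{k=1}^{r}\lambda_k=\mathcal Q(\mathcal A,\mathcal B)-1$ in hand, both directions reduce to the elementary chain ``top eigenvalue $\le$ trace $\le$ rank times top eigenvalue''. For the forward implication, $\lambda_1\le\sum_k\lambda_k=\mathcal Q(\mathcal A,\mathcal B)-1\le\epsilon^{2}/(d^{\mathrm{max}}_{\mathcal A\mathcal B}-1)\le\epsilon^{2}$, so $\|(\mathcal E_{\mathcal A}-\mathcal D_n)\circ\mathcal E_{\mathcal B}\|_2=\sqrt{\lambda_1}\le\epsilon$ (strictly when $d^{\mathrm{max}}_{\mathcal A\mathcal B}\ge3$), i.e.\ $\mathcal B$ is $\epsilon$-private relative to $\mathcal A$. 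For the converse, from $\lambda_1<\epsilon^{2}$ we get $\mathcal Q(\mathcal A,\mathcal B)-1=\sum_{k=1}^{r}\lambda_k\le r\,\lambda_1<(d^{\mathrm{min}}_{\mathcal A\mathcal B}-1)\epsilon^{2}$, which is the asserted bound. The one genuinely non-mechanical step is the bookkeeping of the first two paragraphs — collapsing the privacy operator to $P_{V_{\mathcal A}}P_{V_{\mathcal B}}$ and recognizing that $\mathcal Q-1$ is the trace while the squared privacy $2$-norm is the largest eigenvalue of one and the same positive operator $R$, together with the rank estimate $r\le d^{\mathrm{min}}_{\mathcal A\mathcal B}-1$; once those are pinned down, each implication is a one-line computation.
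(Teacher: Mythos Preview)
Your proof is correct and in fact cleaner than the paper's. Both arguments rest on the same underlying identity $\mathcal Q(\mathcal A,\mathcal B)-1=\sum_{i\neq1,j\neq1}|\langle A_i,B_j\rangle|^{2}$, but the paper works this out in coordinates: it picks self-adjoint orthonormal bases $\{A_i\}$, $\{B_j\}$ with $A_1=B_1=n^{-1/2}I_n$, expands each $B_j$ in the $A_i$ basis, and for the forward direction bounds $\|\mathcal E_{\mathcal A}(B)-\mathcal D_n(B)\|_2$ for a general $B=\sum c_jB_j$ via the triangle inequality and Cauchy--Schwarz, picking up the factor $d_2-1$; for the converse it reverses the estimate term by term. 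Your approach instead recognizes $\mathcal E_{\mathcal A}$, $\mathcal E_{\mathcal B}$, $\mathcal D_n$ as orthogonal projections $P_{\mathcal A}$, $P_{\mathcal B}$, $P_{\C I_n}$ on the Hilbert space $(M_n(\C),\langle\cdot,\cdot\rangle)$, reduces the privacy operator to $\Phi=P_{V_{\mathcal A}}P_{V_{\mathcal B}}$, and then reads both $\|\Phi\|_2^{2}$ and $\mathcal Q-1$ off the single positive operator $R=\Phi^{*}\Phi$ as its top eigenvalue and its trace; the two directions become the trivial inequalities $\lambda_1\le\mathrm{Tr}(R)\le(\operatorname{rank}R)\lambda_1$ together with $\operatorname{rank}R\le d^{\min}_{\mathcal A\mathcal B}-1$.

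What your route buys: it makes transparent (via $\|\Phi\|=\|\Phi^{*}\|$) that ``$\mathcal B$ is $\epsilon$-private relative to $\mathcal A$'' and ``$\mathcal A$ is $\epsilon$-private relative to $\mathcal B$'' are equivalent, so the ``respectively'' hypotheses are redundant; it also shows in passing the sharper forward estimate $\|\Phi\|_2^{2}\le\mathcal Q-1$, without the $d^{\max}_{\mathcal A\mathcal B}-1$ slack. The paper's coordinate argument, by contrast, makes the link to Corollary~\ref{Qcompute} and the Kraus/Choi picture more explicit, which fits its narrative. Your caveat about strict versus non-strict inequality in the forward direction is well taken; the same $\le$ versus $<$ issue is present in the paper's own argument.
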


\begin{proof} As above, we pick o.n. bases $\{A_i\}_{i=1}^{d_1}$ and $\{B_j\}_{j=1}^{d_2}$ with $A_1 = B_1 = \frac{1}{\sqrt{n}}I_n$. (Note that $d^{\mathrm{min}}_{\mathcal A \mathcal B} \leq d_j\leq d^{\mathrm{max}}_{\mathcal A \mathcal B}$, $j=1,2$.) We will also choose the basis so that $A_i = A_i^*$ for each $i$, and similarly for $B_j$.

For the forward direction, let $\epsilon' = \epsilon^2(d_{\mathcal A \mathcal B} - 1)^{-1}$. Since
$$
\mathcal{Q}(\mathcal{A},\mathcal{B}) = 1 + \sum_{(i,j)\neq (1,1)} |\mathrm{Tr}(A_iB_j)|^2 \leq  1 + \epsilon' ,
$$
it follows that for each $j\neq 1$, we have
$$
\sum_{i=2}^{d_1} |\mathrm{Tr}(A_iB_j)|^2 = \sum_{i=2}^{d_1} |< A_i , B_j >   |^2 \leq \epsilon' .
$$

Next, we extend $\{A_i\}_{i=1}^{d_1}$ to an o.n. basis for the full matrix space by appending elements  $\{A_j^\perp \}_{j=1}^{n^2 -d_2}$ and express each $B_j$ in terms of this basis:
$$
B_j = \frac{\mathrm{Tr}(B_j)}{n}I_n + \sum_{i=2}^{d_1} b_{ji} A_i + \sum_{k=1}^{n^2 - d_1} c_{jk}A_k^{\perp}.
$$
By orthonormality of our basis we have $|b_{ji}|^2 = |\mathrm{Tr}(A_i^*B_j)|^2 = |\mathrm{Tr}(A_iB_j)|^2$ for all $i,j$, and hence for all $j\neq 1$,
$$
\sum_{i=2}^{d_1} | b_{ji} |^2 \leq \epsilon' .
$$

Now we apply $\mathcal{E}_{\mathcal{A}}$ to $B_j$, using its decomposition above and noting that $<B_j , I_n> = \tr(B_j) =0$ and $\mathcal E_{\mathcal A}(A_k^\perp)=0$, to obtain
$
\mathcal{E}_{\mathcal{A}}(B_j) =  \sum_{i=2}^{d_1} b_{ji}A_i .
$
Hence by orthonormality of the $A_i$ we have:
\[
\|\mathcal{E}_{\mathcal{A}}(B_j) - \frac{\mathrm{Tr}(B_j)}{n}I_n\|_2^2 = \|\sum_{i=2}^{d_1}b_{ji}A_i\|_2^2
= \sum_{i=2}^{d_1} |b_{ji}|^2  \leq \epsilon' .
\]

Finally, we pick an arbitrary $B\in \mathcal B$ and decompose it as $B = \frac{\mathrm{Tr}(B)}{n}I_n + \sum_{j=2}^{d_2} c_j B_j$. Observe that $\| B \|_2^2 \geq \sum_j |c_j|^2$. Then applying $\mathcal{E}_{\mathcal{A}}$ we get
$$
\mathcal{E}_{\mathcal{A}}(B) - \frac{\mathrm{Tr}(B)}{n}I_n = \sum_{j=2}^{d_2} c_j \mathcal{E}_{\mathcal{A}}(B_j).
$$
Thus we have
\begin{align*}
\big( \|\mathcal{E}_{\mathcal{A}}(B) - \frac{\mathrm{Tr}(B)}{n}I_n\|_2 \big)^2 & = \big( \|\sum_{j=2}^{d_2}c_j\mathcal{E}_{\mathcal{A}}(B_j)\|_2 \big)^2 \\
& \leq \big( \sum_{j=2}^{d_2}|c_j| \|\mathcal{E}_{\mathcal{A}}(B_j)\|_2 \big)^2 \\
& \leq \big( \sum_{j=2}^{d_2} |c_j|^2  \big)  \big( \sum_{j=2}^{d_2} || \mathcal E_{\mathcal A}(B_j)  ||_2^2    \big)  \\
& \leq \|B\|_2^2 (d_2 -1) \epsilon' \\
& \leq \epsilon^2 \| B \|_2^2 .
\end{align*}
As $B\in \mathcal B$ was arbitrary, it follows that $|| ( \mathcal E_{\mathcal A}  - \mathcal D_n ) \circ \mathcal E_{\mathcal B} ||_2 < \epsilon$, and this direction of the proof is complete.

For the converse direction, suppose that $\mathcal B$ is $\epsilon$-private relative to $\mathcal A$, so that
\[
\|\mathcal{E}_{\mathcal{A}}(B) - \frac{\mathrm{Tr}(B)}{n}I_n\|_2 \leq \epsilon \| B  \|_2 \quad\quad  \forall B \in \mathcal B.
\]
As each $B_j$ is traceless and has 2-norm equal to 1, we have $\|\mathcal{E}_{\mathcal{A}}(B_j) \|_2 \leq \epsilon$. So if we write (again using $<B_j, I_n> =0$),
\[
B_j = \sum_{i=2}^{d_1} <B_j , A_i > A_i + \sum_k <B_j , A_i^\perp > A_i^\perp,
\]
then we have
$
\mathcal E_{\mathcal A}(B_j) = \sum_{i=2}^{d_1} <B_j , A_i > A_i
$
and so
\[
\| \mathcal E_{\mathcal A}(B_j)\|_2^2 = \sum_{i=2}^{d_1} | <B_j , A_i >|^2 \leq \epsilon^2 .
\]
Finally, it follows that
\begin{eqnarray*}
\mathcal{Q}(\mathcal{A},\mathcal{B}) &=& 1 +  \sum_{i\neq 1, j\neq 1} | <B_j , A_i >|^2 \\
&=& 1 + \sum_{j=2}^{d_2} \big(   \sum_{i=2}^{d_1} | <B_j , A_i >|^2 \big)   \\
&\leq & 1 + (d_2 - 1)\epsilon^2.
\end{eqnarray*}
Similarly, $\mathcal{Q}(\mathcal{A},\mathcal{B}) \leq 1 + (d_1 - 1)\epsilon^2$, and the result follows.
\end{proof}

\section{Examples}

In this section we apply Theorem~\ref{mainthm} to examples drawn from a number of different quantum information settings.

\begin{example}
	We first present an example fashioned for illustrative purposes, one that also arises in the context of hybrid quantum information memories, processing, and error correction \cite{Kup,bkk1,bkk2,bkk3,braunstein2011zero,grassl2017codes,klappen2018}.

Consider the algebra $\mathcal{A} = M_2(\mathbb{C}) \oplus M_2(\mathbb{C}) \subseteq M_4(\mathbb{C})$ of matrices of the form
	\[ A =
	\begin{pmatrix}
	A_1 & 0 \\
	0 & A_2 \\
	\end{pmatrix},
	\]
with $A_1,A_2 \in M_2(\mathbb{C})$. From a hybrid classical-quantum information perspective, $\mathcal A$ can encode two separate qubits, each with its own classical address.

Additionally, take $\mathcal{B}$ to be the unital $*$-algebra of matrices inside $M_4(\mathbb{C})$ of the form
\[
B = k_1 I_4  +  k_2 \begin{pmatrix}
0&U\\
U^*&0\\
\end{pmatrix}  =
 \begin{pmatrix}
k_1 I_2 & k_2 U\\
k_2 U^* & k_1 I_2 \\
\end{pmatrix}
\]
for complex numbers $k_1, k_2$ and some fixed unitary $U\in M_2(\mathbb{C})$. One can verify that $\mathcal{A}$ is quasiorthogonal to $\mathcal{B}$,  equivalently, $Q(\mathcal{A},\mathcal{B}) = 1 $.

In the specific case that $U = \sigma_x$, the Pauli bit-flip matrix, we now obtain an algebra $\mathcal{C}$ from $\mathcal{B}$ by assuming $\mathcal B$ is exposed to some unitary noise $V$, implemented by the conjugation $C = V B V^*$, $B \in \mathcal{B}$ and $C \in \mathcal{C}$, so $\mathcal C = V {\mathcal B} V^*$. The unitary is chosen to reflect minimal noise exposure, in that the unitary is a small perturbation of the identity, $V = e^T$, where
\[
T = \begin{pmatrix}
\delta&0&0&0\\
0&-\delta&0&0\\
0&0&0&0\\
0&0&0&0\\
\end{pmatrix}
\]
for some fixed $0< \delta  << 1$.

Then computing $Q(\mathcal{A}, \mathcal{C})$ using our characterization from Corollary~\ref{Qcompute} and Mathematica software, we find that
\[
Q(\mathcal{A}, \mathcal{C}) = \frac{1}{2}(\cosh{4\delta} + \cosh{2\delta}).
\]
Thus we may apply the theorem above to quantify the approximate privacy of $\mathcal A$ and $\mathcal{C}$ in the following way:
for any $\epsilon > 0$ satisfying,
\[
\epsilon^2 > 7 \left(\frac{1}{2}(\cosh{4\delta} + \cosh{2\delta})  - 1\right),
\]
we have $Q(\mathcal{A},\mathcal{C}) \leq 1 + \frac{\epsilon^2}{d^{max}_{\mathcal{A}\mathcal{B}} - 1}$, yielding $\epsilon$-privacy as in the theorem. Note that we have used $d^{\mathrm{max}}_{\mathcal{A}\mathcal{C}} = 8$, since $\mathcal{A}$ and $\mathcal{C}$ have dimensions $8$ and $2$ respectively.
\end{example}

\begin{example}
In \cite{jochym1} the first example of a private quantum subsystem \cite{ambainis,boykin,bartlett1,bartlett2} was discovered such that no private subspaces existed for the given channel, and error-correction complementarity \cite{kks,jochym,cklt} failed. This example motivated further work and generalizations, including a framing of it in terms of operator algebra language \cite{ljklp,klp17}. With the algebra perspective we can apply the theorem above to that example.

Here we take $\mathcal{A} = \triangle_4$ to be the algebra of diagonal matrices inside $M_4(\mathbb{C})$ with respect to a given basis. Then let $\mathcal{B} = U^* (I_2 \otimes M_2(\mathbb{C})) U$, where $U$ is the unitary
\[
U = \frac{1}{\sqrt{2}}\begin{pmatrix}
1&0&-i&0\\
0&1&0&i\\
0&1&0&-i\\
1&0&i&0\\
\end{pmatrix} .
\]

Then one can check that we have $\mathcal{Q}(\mathcal{A},\mathcal{B}) = 1$; indeed, this can be seen directly through an application of Corollary~\ref{Qcompute} or as a consequence of the results from \cite{jochym}. Now consider the subalgebra $\mathcal{C} = V^* (I_2 \otimes M_2(\mathbb{C})) V$ defined with the modified unitary $V = Ue^{T}$, where
\[
T = \begin{pmatrix}
\delta&0&0&0\\
0&-\delta&0&0\\
0&0&0&0\\
0&0&0&0\\
\end{pmatrix},
\]
for some $0< \delta << 1$. Hence, $\mathcal C = (e^T)^* \mathcal B (e^T)$.

With a choice of basis for the subalgebras and making use of Corollary~\ref{Qcompute} again, we can compute
\[
\mathcal{Q}(\mathcal{A},\mathcal{C}) = \frac{1}{4} e^{-4 \delta } \left(e^{4 \delta }+1\right)^2.
\]
Then, for some suitable choice of $\epsilon(\delta)$ we have
\[\mathcal{Q}(\mathcal{A},\mathcal{C}) \leq 1 + \frac{\epsilon(\delta)^2}{3},
\]
and by Theorem 3.3 we say that $\mathcal{A}$ is $\epsilon$-private relative to $\mathcal{C}$ (and vice-versa). Note that in this case, $d^{\mathrm{max}}_{\mathcal{A}\mathcal{B}} = 4$.
\end{example}

\begin{example}
Another example of quasiorthogonal subalgebras for which we can study the approximate case comes from the study of mutually unbiased bases (MUB) \cite{bandyopadhyay2002new,kr04,kimura2006,durt2010mutually,spengler2012entanglement}.

MUB are useful in many quantum information protocols because of their defining property. Specifically, two orthonormal basis $\{ |\phi_i\rangle \}$ and $\{|\psi_k\rangle\}$  of $\mathbb{C}^n$ are {\it mutually unbiased} if for all $i,k = 1, \dots , n$,
\[
|\langle \phi_i | \psi_k \rangle| = \frac{1}{\sqrt{n}}.
\]
There is a maximal abelian subalgebra (MASA) \cite{davidson} denoted $\mathcal{A}_{\psi}$ (and similarly $\mathcal{A}_\phi$) in $M_n(\mathbb{C})$ associated with each basis in the following way: $\mathcal{A}_{\psi} = \mathrm{span}\{ P_{\psi_i} : 1 \leq i \leq n  \}$ is the linear span of the orthonomal projectors $P_{\psi_i} = \kb{\psi_i}{\psi_i} \in M_n(\mathbb{C})$ onto the one-dimensional vector subspaces $\mathrm{span}\{\ket{\psi_i} \} = \mathbb{C}|\psi_i \rangle$.
The subalgebras $\mathcal A_\psi$, $\mathcal A_\phi$ are quasiorthogonal if and only if the bases are mutually unbiased. This can be checked using the fact that
\[
\mathrm{Tr}(P_{\psi_i} P_{\phi_k}) =|\langle \psi_i | \phi_k \rangle| ^2
\]
and the criterion for quasiorthogonality, $\mathcal{Q} (\mathcal{A}_{\psi}, \mathcal{A}_{\phi}) = 1$ using Corollary~\ref{Qcompute} for instance.

The maximum number of MUB in an arbitrary dimension is not known in general (they are known for cases where the dimension is a power of a prime). Here we exploit the concept of "approximate" mutually unbiased bases as discussed in \cite{klappenecker2005approximately,shparlinski2006constructions}.

Given an $\epsilon>0$, we call a system of  $n^2+n$ vectors in $\mathbb{C}^n$ 	which are the elements of $n+1$ orthonormal bases $\mathcal{B}_k=\{\psi_{k,1},\cdots,\psi_{k,n}\}$ of $\mathbb{C}^n$ where $k=0,1,2,\cdots,n$ $\epsilon$-approximately mutually unbiased bases if
\[|\langle \psi_{k,i},\psi_{j,l}\rangle|^2\leq\frac{1+\epsilon}{n},\]
for every $0\leq k,l\leq n, k\neq l, 1\leq i,j\leq n.$

One such construction, from \cite{klappenecker2005approximately}, asserts the existence of a system of approximately MUBs with an inequality of the following type:
\[
|\langle \phi_i | \psi_k \rangle| \leq \left(2 + O(n^{-\frac{1}{10}}) \right) n^{-\frac{1}{2}}.
\]
Using the above expression and the measure of orthogonality given in Corollary~\ref{Qcompute}, one sees that the associated MASAs are approximately private and quasiorthogonal for some $\epsilon$. In the 4-dimensional (two-qubit) case, for instance, one has
\[
|\langle \phi_i | \psi_k \rangle| \leq 1 + \lambda(4),
\]
where $\lambda(4)$ is some $O(4^{-\frac{3}{5}})$ expression. Thus each pair of MASAs is $\epsilon$-private whenever $\sqrt{3 \lambda(4)} < \epsilon$ .

\end{example}

\section{Outlook}

We have explicitly linked approximate quasiorthogonality of operator algebras with an appropriate notion of approximate relative privacy for the algebras, determined by the actions of their conditional expectation channels. We focussed on unital algebras and the notion of quantum privacy defined by privatizing to the identity operator as this includes many natural examples and it kept the technical issues manageable. That said, we expect it should be possible to extend this result to more general algebras and more general notions of privacy, for instance as has been accomplished for quantum error correction \cite{bkk3} and private quantum codes \cite{cklt}. It would also be interesting to see if this work could help to generate new constructions of approximate MUB or be applied to the study of SIC-POVM's \cite{klappenecker2005approximately,shparlinski2006constructions} through focus on the commutative algebra case of our result.
We leave these and other investigations to be pursued elsewhere.

\vspace{0.1in}

{\noindent}{\it Acknowledgements.}  D.W.K. was partly supported by NSERC and a University Research Chair at Guelph. J.L. holds a University of Guelph - Institute for Quantum Computing postdoctoral fellowship. M.N. was partly supported by Mitacs and
the African Institute for Mathematical Sciences. R.P. was partly supported
by NSERC. M.R. holds a postdoctoral fellowship in the Department of Pure Mathematics, University of Waterloo.

\bibliography{privacy-mult}

\providecommand{\bysame}{\leavevmode\hbox to3em{\hrulefill}\thinspace}
\providecommand{\MR}{\relax\ifhmode\unskip\space\fi MR }
\providecommand{\MRhref}[2]{%
  \href{http://www.ams.org/mathscinet-getitem?mr=#1}{#2}
}
\providecommand{\href}[2]{#2}
\begin{thebibliography}{10}

\bibitem{ambainis}
Andris Ambainis, Michele Mosca, Alain Tapp, and Ronald de~Wolf, \emph{Private
  quantum channels}, IEEE Symposium on Foundations of Computer Science~(FOCS)
  (2000), 547--553.

\bibitem{bandyopadhyay2002new}
Somshubhro Bandyopadhyay, P~Oscar Boykin, Vwani Roychowdhury, and Farrokh
  Vatan, \emph{A new proof for the existence of mutually unbiased bases},
  Algorithmica \textbf{34} (2002), no.~4, 512--528.

\bibitem{bartlett2}
Stephen~D Bartlett, Patrick Hayden, and Robert~W Spekkens, \emph{Random
  subspaces for encryption based on a private shared {C}artesian frame},
  Physical Review A \textbf{72} (2005), 052329.

\bibitem{bartlett1}
Stephen~D Bartlett, Terry Rudolph, and Robert~W Spekkens,
  \emph{Decoherence-full subsystems and the cryptographic power of a private
  shared reference frame}, Physical Review A \textbf{70} (2004), 032307.

\bibitem{bkk1}
C{\'e}dric B{\'e}ny, Achim Kempf, and David~W Kribs, \emph{Generalization of
  quantum error correction via the {H}eisenberg picture}, Physical Review
  Letters \textbf{98} (2007), 100502.

\bibitem{bkk2}
\bysame, \emph{Quantum error correction of observables}, Physical Review A
  \textbf{76} (2007), 042303.

\bibitem{bkk3}
\bysame, \emph{Quantum error correction on infinite-dimensional {H}ilbert
  spaces}, Journal of Mathematical Physics \textbf{50} (2009), 062108.

\bibitem{boykin}
P.~Oscar Boykin and Vwani Roychowdhury, \emph{Optimal encryption of quantum
  bits}, Physical Review A \textbf{67} (2003), 042317.

\bibitem{braunstein2011zero}
Samuel~L Braunstein, David~W Kribs, and Manas~K Patra, \emph{Zero-error
  subspaces of quantum channels}, Information Theory Proceedings (ISIT), 2011
  IEEE International Symposium on, IEEE, 2011, pp.~104--108.

\bibitem{choimtx}
Man~Duen Choi, \emph{Completely positive matrices on complex matrices}, Linear
  Algebra and its Applications \textbf{10} (1975), 285--290.

\bibitem{church}
Amber Church, David~W. Kribs, Rajesh Pereira, and Sarah Plosker, \emph{Private
  quantum channels, conditional expectations, and trace vectors}, Quantum
  Information \& Computation \textbf{11} (2011), 774--783.

\bibitem{cklt}
Jason Crann, David~W Kribs, Rupert~H Levene, and Ivan~G Todorov, \emph{Private
  algebras in quantum information and infinite-dimensional complementarity},
  Journal of Mathematical Physics \textbf{57} (2016), 015208.

\bibitem{davidson}
Kenneth~R. Davidson, \emph{C*-algebras by example}, Fields Institute Monograph
  Series, American Mathematical Society, 1996.

\bibitem{durt2010mutually}
Thomas Durt, Berthold-Georg Englert, Ingemar Bengtsson, and Karol
  {\.Z}yczkowski, \emph{On mutually unbiased bases}, International Journal of
  Quantum Information \textbf{8} (2010), no.~04, 535--640.

\bibitem{grassl2017codes}
Markus Grassl, Sirui Lu, and Bei Zeng, \emph{Codes for simultaneous
  transmission of quantum and classical information}, Information Theory
  (ISIT), 2017 IEEE International Symposium on, IEEE, 2017, pp.~1718--1722.

\bibitem{jochym1}
Tomas Jochym-O'Connor, David~W Kribs, Raymond Laflamme, and Sarah Plosker,
  \emph{Private quantum subsystems}, Physical Review Letters \textbf{111}
  (2013), 030502.

\bibitem{jochym}
\bysame, \emph{Quantum subsystems: exploring the complementarity of quantum
  privacy and error correction}, Physical Review A \textbf{90} (2014), 032305.

\bibitem{kimura2006}
Gen Kimura, Hajime Tanaka, and Masanao Ozawa, \emph{Solution to the mean king's
  problem with mutually unbiased bases for arbitrary levels}, Physical Review A
  \textbf{73} (2006), no.~5, 050301.

\bibitem{kr04}
Andreas Klappenecker and Martin Roetteler, \emph{Constructions of mutually
  unbiased bases}, Proceedings 7th International Conference on Finite Fields,
  Springer LNCS (2004), 137--144.

\bibitem{klappenecker2005approximately}
Andreas Klappenecker, Martin R{\"o}tteler, Igor~E Shparlinski, and Arne
  Winterhof, \emph{On approximately symmetric informationally complete positive
  operator-valued measures and related systems of quantum states}, Journal of
  Mathematical Physics \textbf{46} (2005), no.~8, 082104.

\bibitem{kks}
Dennis Kretschmann, David~W Kribs, and Robert~W Spekkens, \emph{Complementarity
  of private and correctable subsystems in quantum cryptography and error
  correction}, Physical Review A \textbf{78} (2008), 032330.

\bibitem{kribs2018quantum}
David~W Kribs, Jeremy Levick, Mike~I Nelson, Rajesh Pereira, and Mizanur
  Rahaman, \emph{Quantum complementarity and operator structures}, Quantum
  Information \& Computation \textbf{19} (2019), 67--83.

\bibitem{Kup}
Greg Kuperberg, \emph{The capacity of hybrid quantum memory}, IEEE Transactions
  on Information Theory \textbf{49} (2003), 1465--1473.

\bibitem{ljklp}
Jeremy Levick, Tomas Jochym-O'Connor, David~W Kribs, Raymond Laflamme, and
  Rajesh Pereira, \emph{Private subsystems and quasiorthogonal operator
  algebras}, Journal of Physics A: Mathematical and Theoretical \textbf{49}
  (2016), 125302.

\bibitem{klp17}
Jeremy Levick, David~W Kribs, and Rajesh Pereira, \emph{Quantum privacy and
  {S}chur product channels}, Reports on Mathematical Physics \textbf{80}
  (2017), no.~3, 333--347.

\bibitem{klappen2018}
Andrew Nemec and Andreas Klappenecker, \emph{Hybrid codes}, Information Theory
  Proceedings (ISIT), 2018 IEEE International Symposium on, IEEE, 2018,
  pp.~796--800.

\bibitem{ohno2008quasi}
Hiromichi Ohno, \emph{Quasi-orthogonal subalgebras of matrix algebras}, Linear
  Algebra and its Applications \textbf{429} (2008), no.~8-9, 2146--2158.

\bibitem{ohno2007quasi}
Hiromichi Ohno, D{\'e}nes Petz, and Andr{\'a}s Sz{\'a}nt{\'o},
  \emph{Quasi-orthogonal subalgebras of 4$\times$ 4 matrices}, Linear Algebra
  and its Applications \textbf{425} (2007), no.~1, 109--118.

\bibitem{petz2007complementarity}
D{\'e}nes Petz, \emph{Complementarity in quantum systems}, Reports on
  Mathematical Physics \textbf{59} (2007), 209--224.

\bibitem{petz2010algebraic}
\bysame, \emph{Algebraic complementarity in quantum theory}, Journal of
  Mathematical Physics \textbf{51} (2010), 015215.

\bibitem{petz2007complementary}
D{\'e}nes Petz and Jonas Kahn, \emph{Complementary reductions for two qubits},
  Journal of Mathematical Physics \textbf{48} (2007), no.~1, 012107.

\bibitem{petz2009complementarity}
D{\'e}nes Petz, Andr{\'a}s Sz{\'a}nt{\'o}, and Mih{\'a}ly Weiner,
  \emph{Complementarity and the algebraic structure of four-level quantum
  systems}, Infinite Dimensional Analysis, Quantum Probability and Related
  Topics \textbf{12} (2009), no.~01, 99--116.

\bibitem{shparlinski2006constructions}
Igor~E Shparlinski and Arne Winterhof, \emph{Constructions of approximately
  mutually unbiased bases}, Latin American Symposium on Theoretical
  Informatics, Springer, 2006, pp.~793--799.

\bibitem{spengler2012entanglement}
Christoph Spengler, Marcus Huber, Stephen Brierley, Theodor Adaktylos, and
  Beatrix~C Hiesmayr, \emph{Entanglement detection via mutually unbiased
  bases}, Physical Review A \textbf{86} (2012), no.~2, 022311.

\bibitem{watrous}
John Watrous, \emph{The theory of quantum information}, Cambridge University
  Press, 2018.

\bibitem{Wein}
Mih{\' a}ly Weiner, \emph{On orthogonal systems of matrix algebras}, Linear
  Algebra and its Applications \textbf{433} (2010), 520 -- 533.

\end{thebibliography}
\bibliographystyle{amsplain}

\end{document}